\pgfplotsset{width=6cm,compat=newest}
\newcommand{\keywords}[1]{\par\addvspace\baselineskip
\noindent\keywordname\enspace\ignorespaces#1}
\newtheorem{ex}{Example}
\begin{document}
\titlerunning{An Efficient Primal-Dual Algorithm for Fair Combinatorial Optimization}
\mainmatter
\title{%
An Efficient Primal-Dual Algorithm for Fair Combinatorial Optimization Problems}
\author{%
Viet Hung Nguyen\inst{1} \and Paul Weng\inst{2}
}
\institute{Sorbonne Universit\'es, UPMC Univ Paris 06, UMR 7606, LIP6, Paris, France \email{Hung.Nguyen@lip6.fr}
\and
SYSU-CMU Joint Institute of Engineering, Guangzhou, China\\
SYSU-CMU Joint Research Institute, Shunde, China\\
School of Electronics and Information Technology, SYSU\\
\email{paweng@mail.sysu.edu.cn}
}

\maketitle
\begin{abstract}

We consider a general class of combinatorial optimization problems including among others allocation, multiple knapsack, matching or travelling salesman problems. The standard version of those problems is the maximum weight optimization problem where a sum of values is optimized. However, the sum is not a good aggregation function when the fairness of the distribution of those values (corresponding for example to different agents' utilities or criteria) is important. In this paper, using the generalized Gini index (GGI), a well-known inequality measure, instead of the sum to model fairness, we formulate a new general problem, that we call fair combinatorial optimization. Although GGI is a non-linear aggregating function, a $0,1$-linear program (IP) can be formulated for finding a GGI-optimal solution by exploiting a linearization of GGI proposed by Ogryczak and Sliwinski \cite{OgryczakSliwinski03}. However, the time spent by commercial solvers (e.g., CPLEX, Gurobi...) for solving (IP) increases very quickly with instances' size and can reach hours even for relatively small-sized ones.  As a faster alternative, we propose a heuristic for solving (IP) based on a primal-dual approach using Lagrangian decomposition. 
We demonstrate the efficiency of our method by evaluating it against the exact solution of (IP) by CPLEX on several fair optimization problems related to matching. 
The numerical results show that our method outputs in a very short time efficient solutions giving lower bounds that CPLEX may take several orders of magnitude longer to obtain. Moreover, for instances for which we know the optimal value, these solutions are quasi-optimal with optimality gap less than 0.3\%.

%
\vspace{-3mm}
\keywords{Fair Optimization; Generalized Gini Index; Ordered Weighted Averaging; Matching; Subgradient Method.}
\end{abstract}

\vspace{-10mm}
\section{Introduction}
\label{sec:intro}
 
%
%
%
%
%
%
%
\vspace{-3mm}
The solution of a weighted combinatorial optimization problem can be seen as the selection of $n$ values in a combinatorial set $\mathcal X \subset \mathbb R^n$.
The maximum weight version of such a problem consists in maximizing the sum of these $n$ values (e.g., $\sum_{i=1}^n u_i$).
For instance, in a matching problem on a graph, the sum of weights that is optimized corresponds to the sum of weights of the edges selected in a matching.
In practice, the vector of weights $(u_1, u_2, \ldots, u_n)$ could receive different interpretations depending on the actual problem.
In a multi-agent setting, each value $u_i$ represents the utility of an agent $i$, as in a bi-partite matching  problem where $n$ objects have to be assigned to $n$ agents.
In a multi-criteria context, those $n$ values can be viewed as different dimensions to optimize.
For example, in the travelling salesman problem (TSP) with $n$ cities, a feasible solution (i.e., Hamiltonian cycle) is valued by an $n$-dimensional vector where each component represents the sum of the distances to reach and leave a city.

\vspace{-.5mm}
In both interpretations, it is desirable that the vector of values $(u_1, u_2, \ldots, u_n)$ be both Pareto-optimal (i.e., not improvable on all components at the same time) and balanced (or fair).
We call optimization with such concerns {\it fair optimization} by adopting the terminology from multi-agent systems.
In this paper, we focus on the fair optimization version of a class of combinatorial problems (including allocation, general matching, TSP...). 
Note that optimizing the sum of the values (i.e., maximum weight problem) yields a Pareto-optimal solution, but does not provide any guarantee on how balanced the vector solution would be.

\vspace{-.5mm}
Various approaches have been proposed in the literature to provide such a guarantee with different models for fairness or "balancedness" (see Section~\ref{sec:related} for an overview).
In this paper, our approach is based on an inequality measure called {\em Generalized Gini Index} (GGI) \cite{Weymark81}, which is well-known and well-studied in economics and can be used to control for both Pareto-efficiency and fairness. 
Indeed, fairness has naturally been investigated in economics \cite{Moulin04}.
In this literature, two important requirements have been identified as essential for fairness: equal treatment of equals and efficiency.
The first notion implies that two agents with the same characteristics (notably the same preferences) have to be treated the same way, while the second entails that a fair solution should be Pareto-optimal.
GGI satisfies both requirements, as it is symmetric in its arguments and increasing with Pareto dominance.
The notion of fairness that GGI encodes is based on the Pigou-Dalton transfer principle, which states that a small transfer of resource from a richer agent to a poorer one yields a fairer distribution.

\vspace{-.5mm}
To the best of our knowledge, fair optimization in such general combinatorial problems has not been considered so far, although the GGI criterion has been investigated before in some specific problems (allocation \cite{LescaPerny11}, capital budgeting \cite{KOWejor04}, Markov decision process \cite{OgryczakPernyWeng11ADT,OgryczakPernyWeng13}...).
The difficulty of this combinatorial optimization problem lies in the fact that the objective function is non-linear.
The contribution of this paper is fourfold:
(1) we introduce a new general combinatorial problem (e.g., fair matching in general graph or fair TSP have not been studied so far);
(2) we provide an optimality condition and an approximation ratio; 
(3) we propose a fast general heuristic method based on a primal-dual approach and on Lagrangian decomposition;
(4) we evaluate this method on several problems related to matching to understand its efficiency.
Although our general combinatorial formulation covers problems whose maximum weight version is NP-hard, we leave for a follow-up work the integration of our fast heuristic with approximation algorithms to solve those NP-hard problems.

\vspace{-.5mm}
The paper is organized as follows.
Section~\ref{sec:related} gives an overview of related work.
Section~\ref{sec:model} provides a formal definition of our problem, which can be solved by a $0,1$-linear program.
As a faster alternative, we present a heuristic primal-dual solving method based on Lagrangian decomposition in Section \ref{sec:method}  and evaluate it experimentally in Section~\ref{sec:expe}.
Finally, we conclude in Section~\ref{sec:conclu}.


\vspace{-2mm}
\section{Related Work}
\label{sec:related}

\vspace{-2mm}
Fair optimization is an active and quite recent research area \cite{OgryczakLussPioroNaceTomaszewski14,Luss12} in 
multiobjective optimization.
Fairness can be modeled in different ways.
One simple approach is based on maxmin, so called Egalitarian approach, where one aims at maximizing the worse-off component (i.e., objective, agent...).
Due to the drowning effects of the min operator, vectors with the same minimum cannot be discriminated.
A better approach \cite{Rawls71} is based on the lexicographic maxmin, which consists in considering the minimum first when comparing two vectors, then in case of a tie, focusing on the second smallest values and so on.
However, due to the noncompensatory nature of the min operator, vector $(1, 1, \ldots, 1)$ would be preferred to $(0, 100, \ldots, 100)$, which may be debatable.
To take into account this observation, one can resort to use a strictly increasing and strictly Schur-concave (see Section~\ref{sec:owa} for definition) aggregation function $f$ (see \cite{OgryczakLussPioroNaceTomaszewski14} for examples) that evaluates each vector such that higher values are preferred.

In this paper, we focus on the Generalized Gini Index (GGI) proposed in the economics literature \cite{Weymark81}, because it satisfies natural properties for encoding fairness.
GGI is a particular case of a more general family of operators known as Ordered Weighted Averaging (OWA) \cite{Yager88}.
Much work in fair optimization has applied the OWA operator and GGI in multiobjective (continuous and combinatorial) optimization problems.
To cite a few, it was used in network dimensioning problems \cite{OgryczakSliwinskiWierzbicki03}, capital budgeting \cite{KOWejor04}, allocation problems \cite{LescaPerny11}, 
flow optimization in wireless mesh networks \cite{HurkalaSliwinski12} and multiobjective sequential decision-making under uncertainty \cite{OgryczakPernyWeng11ADT,OgryczakPernyWeng13}.
One common solving technique is based on a linearization trick of the nonlinear objective function based on GGI \cite{OgryczakSliwinski03}.
Recently, \cite{GilbertSpanjaard17} considered a similar setting to ours, but tries to solve its continuous relaxation.

In multicriteria decision-making, fair optimization is related to compromise optimization, which generally consists in minimizing a distance to an ideal point \cite{Steuer86}. 
More generally, the ideal point can be replaced by any reference point that a decision maker chooses, as in the reference point method \cite{Wierzbicki82}.
In this context, a judiciously chosen reference point can help generate a solution with a balanced profile on all criteria.
One main approach is based on minimizing the augmented weighted Tchebycheff distance.
This method has been applied in many multicriteria problems, for instance, in process planning \cite{RoderaBagajewiczTrafalis02}, in sequential decision-making under uncertainty \cite{PernyWeng10ECAI}, in discrete bicriteria optimization problems \cite{DachertGorskiKlamroth12}, in multiobjective multidimensional knapsack problems \cite{LustTeghem12}.


Note that our combinatorial optimization problem should not be confused with the multicriteria version of those problems where each scalar weight becomes vectorial and the value of a solution is obtained by aggregating the selected weight vectors with a componentwise sum.
For instance, Anand \cite{Anand06} investigated a multicriteria version of the matching problem and proved that the egalitarian approach for vector-valued matching leads to NP-hard problems.
In our problem, the weights are scalar and the value of a solution is {\em not} obtained by summing its scalar weights, but by aggregating them with GGI. 


\vspace{-3mm}
\section{Model}
\label{sec:model}

\vspace{-2mm}
In this section, we formally describe the general class of combinatorial problems considered in this paper and provide some concrete illustrative examples in this class.
Then we recall the generalized Gini index as a measure of fairness and define the fair combinatorial optimization problems tackled in this paper.
We start with some notations.
For any integer $n$, $[n]$ denotes the set $\{1, 2, \ldots n\}$.
For any vector $\bm x$, its component is denoted $x_i$ or $x_{ij}$ depending on its dimension.

\vspace{-2mm}
\subsection{General Model}

We consider a combinatorial optimization problem (e.g., allocation, multiple knapsack, matching, travelling salesman problem...), whose feasible solutions $\mathcal X \subseteq \{0, 1\}^{n \times m}$ can be expressed as follows:

\vspace{-6mm}
\begin{align*}
& \bm A \bm z \le \bm b  \\
& \bm z \in \{0, 1\}^{n \times m} 
\end{align*}

\vspace{-3mm}
\noindent where $\bm A \in \mathbb Z^{p \times (nm)}$, $\bm b \in \mathbb Z^{p}$, $n$, $m$ and $p$ are three positive integers,  and $\bm z$ is viewed as a one-dimensional vector $(z_{11}, \ldots, z_{1m}$, $z_{21}, \ldots, z_{2m}$, $\ldots$, $z_{n1}, \ldots, z_{nm})^\intercal$.

Let $u_{ij} \in \mathbb N$ be the utility of setting $z_{ij}$ to $1$.
The {\em maximum weight problem} defined on combinatorial set $\mathcal X$ can be written as a $0,1$-linear program ($0,1$-LP):

\vspace{-1mm}
\begin{subequations}
\small
\begin{empheq}[]{alignat=2}
{\max}. ~ & \sum_{i \in [n]} \sum_{j \in [m]} u_{ij} z_{ij} \nonumber\\
\mbox{s.t.~~} & \bm z \in \mathcal X \nonumber
\end{empheq}
\end{subequations}

\vspace{-0mm}
\noindent Because this general problem includes the travelling salesman problem (TSP), it is NP-hard in general.
As mentioned before, this objective function provides no control on the fairness of the obtained solution.
Although possibly insufficient, one simple approach to fairness consisting in focusing on the worse-off component is the {\em maxmin problem} defined on set $\mathcal X$, which can also be written as a $0,1$-LP:

\vspace{-1mm}
\begin{subequations}
\small
\begin{empheq}[]{alignat=2}
{\max}. ~ & v \nonumber\\
\mbox{s.t.~~} & v \le \sum_{j \in [m]} u_{ij} z_{ij} &\quad& \forall i \in [n] \nonumber\\
& \bm z \in \mathcal X \nonumber
\end{empheq}
\end{subequations}

\vspace{-0mm}
\noindent Even for some polynomial problems like allocation, this version is NP-hard in general \cite{BezakovaDani05}.
To avoid any confusion, in this paper, allocation refers to matching on a bi-partite graph and matching generally implies a complete graph.

For illustration, we now present several instantiations of our general model on allocation and matching problems, some of which will be used for the experimental evaluation of our proposed methods in Section~\ref{sec:expe}.

\vspace{-3mm}
\begin{ex}[Allocation]
Let $G=(V_1 \cup V_2, E, u)$ be a valued bipartite graph where $V_1$ and $V_2$ are respectively an $n$-vertex set and an $m$-vertex set with $V_1 \cap V_2 = \emptyset$, $E \subseteq \big\{\{x, y\} \,|\, (x, y) \in V_1\times V_2\big\}$ is a set of non-directed edges and $u : E \to \mathbb R$ defines the nonnegative utility (i.e., value to be maximized) of an edge.
As there is no risk of confusion, we identify $V_1$ to the set $[n]$ and $V_2$ to the set $[m]$.
An {\em allocation} of $G$ is a subset of $E$ such that each vertex $i$ in $V_1$ is connected to $\alpha_{i}$ to $\beta_{i}$ vertices in $V_2$ and each vertex in $V_2$ is connected to $\alpha'_{j}$ to $\beta'_{j}$ vertices in $V_1$ where
$(\bm\alpha, \bm\beta) \in \mathbb N^{n \times n}$ and $(\bm\alpha', \bm\beta') \in \mathbb N^{m \times m}$.

The {\em assignment} problem where $n$ tasks need to be assigned to $n$ agents is a special case where $n=m$ and $\alpha_{i}=\beta_{i}=\alpha'_{j}=\beta'_{j}=1$ for $i \in [n]$ and $j \in [n]$.
The {\em conference paper assignment} problem where $m$ papers needs to be reviewed by $n$ reviewers such that each paper is reviewed by $3$ reviewers and each reviewer receives at most $6$ papers can be represented with $\alpha_i=0$, $\beta_i=6$, $\alpha'_j=3$ and $\beta'_j=3$ for $i \in [n]$ and $j \in [m]$.
The {\em Santa Claus} problem \cite{BansalSviridenko06} where $m$ toys needs to be assigned to $n$ children with $n \le m$ is also a particular case with $\alpha_i=0$, $\beta_i=m$, $\alpha'_j=\beta'_j=1$ for $i \in [n]$ and $j \in [m]$.

The maximum weight problem can be solved with the following $0,1$-LP:

\vspace{-2mm}
\begin{subequations}
\small
\begin{empheq}[]{alignat=2}
{\max}. ~ & \sum_{i \in [n]} \sum_{j \in [m]} u_{ij} z_{ij} \nonumber \\
\mbox{s.t.~~} & \alpha_{i} \le \sum_{j \in [m]} z_{ij} \le \beta_{i} &~& \forall i \in [n] \label{eq:ac1}\\
& \alpha'_{j} \le \sum_{i \in [n]} z_{ij} \le \beta'_{j} &~& \forall j \in [m] \label{eq:ac2}\\
&\bm z \in \{0, 1\}^{n \times m} \nonumber 
\end{empheq}
\end{subequations}

\vspace{-0mm}
\noindent Interestingly, its solution can be efficiently obtained by solving its continuous relaxation because the matrix defining its constraints \eqref{eq:ac1}--\eqref{eq:ac2} is totally unimodular \cite{Schrijver98}.
However, the maxmin version is NP-complete \cite{BezakovaDani05}.
\end{ex}

\vspace{-5mm}
\begin{ex}[Matching]
Let $G=(V, E, u)$ be a valued graph where $V$ is a $2n$-vertex set (with $n \in \mathbb N \backslash \{0\}$), $E \subseteq \big\{\{x, y\} \,|\, (x, y) \in V^2, x \neq y\big\}$ is a set of non-directed edges and $u : E \to \mathbb R$ defines the nonnegative utility 
of an edge.
A {\em matching} $M$ of $G$ is a subset of $E$ such that no pair of edges of $M$ are adjacent, i.e., they do not share a common vertex: $\forall (e, e') \in E^2, e \neq e' \Rightarrow e \cap e' = \emptyset$.
A {\em perfect matching} $M$ is a matching where every vertex of $G$ is incident to an edge of $M$.
Thus, a perfect matching contains $n$ edges.
Without loss of generality, we identify $V$ to the set $[2n]$ and denote $\forall e=\{i, j\} \in E, u_{ij} = u(e)$ when convenient.


The standard maximum weight perfect matching problem aims at finding a perfect matching for which the sum of the utilities of its edges is maximum.
Let $\delta(i)=\{\{i, j\}\in E \,|\, j \in V\backslash \{i\}\}$ be the set of edges that are incident on vertex $i$.
It is known \cite{LovaszPlummer86} that this problem can be formalized as a $0,1$-LP (where $z_{ij}$'s for $i>j$ are unnecessary and can be set to $0$): 

\vspace{-4mm}
\begin{subequations}
\small
\begin{empheq}[left={(\mathcal P_M) }\empheqlbrace]{align}
     {\max}. \quad & \displaystyle\sum_{i \in [2n]} \sum_{j \in [2n], j>i} u_{ij} z_{ij} &  \label{eq:pmo}\\
     \mbox{s.t.} \quad & \displaystyle\sum_{\{i, j\} \in \delta(k), i<j} z_{ij} = 1 & \forall k \in [2n] \label{eq:pmc1} \\
            & z_{ij} \in \{0, 1\} & \forall i \in [2n], j = i+1, \ldots, 2n \label{eq:pmc2}
\end{empheq}
\end{subequations}
where \eqref{eq:pmc1} states that in a matching only one edge is incident on any vertex.

This problem can be solved as an LP by considering the continuous relaxation of $\mathcal P_M$ and adding the well-known blossom constraints \eqref{eq:mc2} in order to remove the fractional solutions introduced by the relaxation:

\vspace{-1mm}
\begin{subequations}
\small
\begin{empheq}[left={(\mathcal {RP}_M) }\empheqlbrace]{alignat=2}
    {\max}. \quad  & \displaystyle\sum_{i \in [2n]}\sum_{j \in [2n], j>i} u_{ij} z_{ij} && \nonumber \\
    \mbox{s.t.} \quad & \displaystyle\sum_{\{i,j\} \in \delta(k), i<j} z_{ij} = 1 &\quad& \forall k \in [2n] \label{eq:mc1} \\
            & z(\delta(S)) \ge 1 && \forall S \subset V, |S| \mbox{ odd}, |S| \ge3 \label{eq:mc2}\\
            & 0 \le z_{ij} \le 1 && \forall i \in [2n], j = i+1, \ldots, 2n \label{eq:mc3}
\end{empheq}
\end{subequations}

\vspace{0mm}
\noindent where $z(\delta(S)) = \sum_{\{i,j\} \in \delta(S), i<j} z_{ij}$ and $\delta(S)=\{\{i, j\}\in E \,|\, i \in S \mbox{~and } j \in V\backslash S\}$.
Constraints \eqref{eq:mc1}--\eqref{eq:mc3} define the so-called {\em perfect matching polytope}.
In practice, this problem can be efficiently solved with the Blossom algorithm proposed by Edmonds \cite{Edmonds65}. 
To the best of our knowledge, the maxmin version of the matching problem (on complete graph) has not been investigated so far.
\end{ex}

In this paper we focus on a variant of those combinatorial problems:
search for a solution $\bm z$ whose distribution of values $\big( \sum_{j \in [m]} u_{ij} z_{ij} \big)_{i \in [n]}$ is fair to its components (e.g., different agents' utilities or criteria).
To model fairness we use a special case of the ordered weighted averaging operator that we recall next.

\vspace{-3mm}
\subsection{Ordered Weighted Average and Generalized Gini Index}\label{sec:owa}

\vspace{-1mm}
The {\em Ordered Weighted Average} (OWA) \cite{Yager88} of $\bm v \in \mathbb R^n$ is defined by:
\begin{align*}
OWA_{\bm w}(v) = \sum_{k \in [n]} w_k v^\uparrow_k
\end{align*}

\vspace{-2mm}
\noindent where $\bm{w} = (w_1, \ldots, w_n) \in [0, 1]^n$ is the OWA weight vector and $\bm v^\uparrow = (v^\uparrow_1, \ldots, v^\uparrow_n)$ is the vector obtained from $\bm{v}$ by rearranging its components in an increasing order.
OWA defines a very general family of operators, e.g., the sum (for $w_k=1$, $\forall k \in [n]$), the average, the minimum (for $w_1=1$ and $w_k=0$, $\forall k > 1$), 
the maximum (for $w_n=1$ and $w_k=0$, $\forall k < n$), the leximin when differences between OWA weights tends to infinity or the augmented weighted Tchebycheff distance \cite{OgryczakPernyWeng13}.

Let the {\em Lorenz components} \cite{Arnold87} of $\bm v$ be denoted by $(L_1(\bm v)$, $\ldots$, $L_n(\bm v))$ and be defined by $\forall k \in [n]$, $L_k(\bm v) = \sum_{i \in [k]} v^\uparrow_i$.
Interestingly, OWA can be rewritten as: 

\vspace{-4mm}
\begin{align}\label{eq:owaLorenz}
OWA_{\bm w}(\bm v) = \sum_{k \in [n]} w'_k L_k(\bm v)
\end{align}

\vspace{-3mm}
\noindent where $\forall k \in [n], w'_k = w_k - w_{k+1}$ and $w_{n+1} = 0$.
With this rewriting, one can see that OWA is simply a weighted sum in the 
space of Lorenz components.

The notion of fairness that we use in this paper is based on the {\em Pigou-Dalton principle} \cite{Moulin88}. 
It states that, all other things being equal, we prefer more ``balanced'' vectors, which implies that any transfer (called {\em Pigou-Dalton transfer}) from a richer component to a poorer one without reversing their relative positions yields a preferred vector.
Formally, for any $\bm v \in \mathbb R^n$ where $v_i < v_j$ and for any $\epsilon \in (0, v_j -v_i)$ we prefer $\bm v+\epsilon \mathbf 1_i - \epsilon \mathbf 1_j$ to $\bm v$ where $\mathbf 1_i$ (resp. $\mathbf 1_j$) is the canonical vector, null everywhere except in component $i$ (resp. $j$) where it is equal to $1$.

When the OWA weights are strictly decreasing and positive \cite{Weymark81}, OWA is called the {\em Generalized Gini Index} (GGI) \cite{Weymark81} and denoted $G_{\bm w}$.
It encodes both:\\[1ex]
\noindent {\bf efficiency:} $G_{\bm w}$ is increasing with respect to Pareto-dominance (i.e., if $\bm v\in \mathbb R^n$ Pareto-dominates\footnote{Vector $\bm v$ Pareto-dominates vector $\bm v'$ if $\forall i \in [n], v_i \ge v'_i$ and $\exists j \in [n], v_j > v'_j$.} $\bm v' \in \mathbb R^n$, then $G_{\bm w}(\bm v) > G_{\bm w}(\bm  v')$); and \\[1ex]
{\bf fairness:} $G_{\bm w}$ is strictly Schur-concave, i.e., it is strictly increasing with Pigou-Dalton transfers ($\forall \bm v\in\mathbb R^n$, $v_i < v_j$, $\forall \epsilon \in (0, v_j -v_i), G_{\bm w}(\bm v+\epsilon \mathbf 1_i - \epsilon \mathbf 1_j) > G_{\bm w}(\bm v)$).\\

\vspace{-3mm}
The classic Gini index, which is a special case of GGI with $w_i = (2(n -i)+1)/n^2$ for all $i \in [n]$, enjoys a nice graphical interpretation (see Figure~\ref{fig:lorenzcurve}).
For a given distribution $\bm v \in \mathbb R^n_+$, let $\bar v$ denote the average of the components of $\bm v$, i.e., $\bar v = \frac{1}{n} \sum_{i=1}^n v_i$.
Distribution $\bm v$ can be represented by the curve going through the points $(0, 0)$ and $(\frac{k}{n}, L_k(\bm v))$ for $k \in [n]$.
The most equitable distribution with the same total sum as that of $\bm v$ (i.e., $n \bar v$) can be represented by the straight line going through the points $(0, 0)$ and $(\frac{k}{n}, k\bar v)$ for $k \in [n]$.
The value $1 - G_{\bm w}(\bm v)/\bar v$ is equal to twice the area between the two curves.
\begin{figure}[tb!]
\centering
\scalebox{.8}{
\begin{tikzpicture}
\begin{axis}[
    xmin=0, xmax=100,
    ymin=0, ymax=100,
    grid,
    ylabel = cumulative value,
    xlabel = cumulative population (in \%),
    legend style={legend pos=north west},
    ]
\addplot+[name path=A] plot 
    coordinates { (0,0) (25,10) (50,20) (75,50) (100,100)};
\addplot+[name path=B] plot 
    coordinates { (0,0) (25, 25) (50, 50) (75, 75) (100,100)};
\addplot+[gray, fill opacity=0.4] fill between[of=A and B];
\end{axis}
\end{tikzpicture}}
\caption{Lorenz curves}
\label{fig:lorenzcurve}
\end{figure}
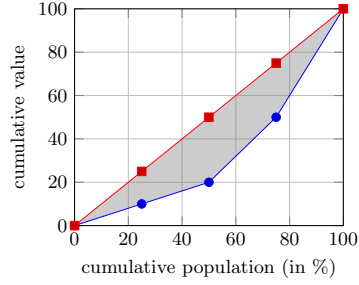

Interestingly, the Lorenz components of a vector can be computed by LP \cite{OgryczakSliwinski03}.
Indeed, the $k$-th Lorenz component $L_k(\bm v)$ of a vector $\bm v$ can be found as the solution of a knapsack problem, which is obtained by solving the following LP:

\vspace{-2mm}
\begin{subequations}
\small
\begin{empheq}[left={(\mathcal{LP}_k) }\empheqlbrace]{align}
{\min}. \quad &\quad \sum_{i \in [n]} a_{ik} x_i & \nonumber\\
\mbox{s.t.} \quad &\quad \sum_{i \in [n]} a_{ik} = k & \nonumber\\
&\quad 0 \le a_{ik} \le 1 & \forall i \in [n] \nonumber
\end{empheq}
\end{subequations}
Equivalently, this can be solved by its dual:
\begin{subequations}
\small
\begin{empheq}[left={(\mathcal{DL}_k) }\empheqlbrace]{align}
{\max}. \quad &\quad k r_k - \sum_{i \in [n]} d_{ik} & \nonumber\\
\mbox{s.t.} \quad &\quad r_k - d_{ik} \le v_i & \forall i \in [n] \nonumber\\
&\quad d_{ik} \ge 0 & \forall i \in [n] \nonumber
\end{empheq}
\end{subequations}
The dual formulation is particularly useful.
Contrary to the primal, it can be integrated in an LP where the $v_i$'s are also variables \cite{OgryczakSliwinski03}.
We will use this technique to formulate a $0,1$-LP to solve our general combinatorial optimization problem.

\subsection{Fair Combinatorial Optimization}

The problem tackled in this paper is defined 
by using GGI as objective function:
\begin{align*}
{\max}. ~ G_{\bm w}\big( \big( \sum_{j \in [m]} u_{ij} z_{ij} \big)_{i \in [n]} \big) & \quad
\mbox{s.t.~~} \left\{
\begin{array}{l}
\bm A \bm z \le \bm b \\
\bm z \in \{0, 1\}^{n \times m}
\end{array}\right.
\end{align*}
Following Ogryczak and Sliwinski \cite{OgryczakSliwinski03}, we can combine the rewriting of OWA based on Lorenz components \eqref{eq:owaLorenz} and LPs $(\mathcal{DL}_k)$ for $k \in [2n]$ to transform the previous non-linear optimization program into a $0,1$-LP:
\begin{subequations}
\small
\label{eq:primal}
\begin{empheq}[]{align}
{\max}. ~ & \sum_{k \in [n]} w'_k ( k r_k - \sum_{i \in [n]} d_{ik} )\label{eq:primal1} \\
\mbox{s.t.~~} & \bm A \bm z \le \bm b && \\
&\bm z \in \{0, 1\}^{n \times m} \\
& r_k - d_{ik} \le \sum_{j \in [m]} u_{ij} z_{ij} && \forall i \in [n], \forall k \in [n] \label{eq:primalc}\\
& d_{ik} \ge 0 && \forall i \in [n], \forall k \in [n] \label{eq:primal2}
\end{empheq}
\end{subequations}
Due to the introduction of new constraints \eqref{eq:primalc}--\eqref{eq:primal2} from LPs $(\mathcal{DL}_k)$, the relaxation of this $0,1$-LP may yield fractional solutions.
The naive approach to solve it would be to give it to a $0,1$-LP solver (e.g., Cplex, Gurobi...).
Our goal in this paper is to propose an adapted solving method for it, which would be much faster than the naive approach by exploiting the structure of this problem.

\section{Alternating Optimization Algorithm} \label{sec:method}

Before presenting our approach, which is a heuristic method based on a primal-dual technique using a Lagrangian decomposition, we first make an interesting and useful observation.
The dual of the continuous relaxation of the previous $0,1$-LP \eqref{eq:primal} is given by:
\begin{subequations}
\small
\label{eq:dual}
\begin{empheq}[]{align}
{\min}. ~ &  \bm b^\intercal \bm v + \sum_{i \in [n]} \sum_{j \in [m]} t_{ij} \label{eq:dual1}\\
\mbox{s.t.~~} & (\bm v^\intercal \bm A)_{ij} +t_{ij} - \sum_{k \in [n]} u_{ij} y_{ik} \ge 0 && \forall i \in [n], \forall j \in [m]\\
& \sum_{i=1}^n y_{ik} = k w'_k && \forall k \in [n]\\
& 0 \le y_{ik} \le w'_k && \forall i \in [n], \forall k \in [n]\\
& v_{j} \ge 0 && \forall j \in [p]\\
& t_{ij} \ge 0 && \forall i \in [n], \forall j \in [m] \label{eq:dual2}
\end{empheq}
\end{subequations}
Interestingly, with fixed $y_{ik}$'s, the dual of the previous program can be written in the following form, which is simply the continuous relaxation of the original program with modified weights:
\begin{subequations}
\small
\label{eq:primalrelaxation}
\begin{empheq}[]{align}
{\max}. ~ & \sum_{i \in [n]} \big(\sum_{k \in [n]} y_{ik} \big) \sum_{j \in [m]} u_{ij}z_{ij} \label{eq:primalrelaxation1} \\
\mbox{s.t.~~} & \bm A \bm z \le \bm b && \\
&\bm z \in [0, 1]^{n \times m} \label{eq:primalrelaxation2}
\end{empheq}
\end{subequations}
Therefore, solving this program with discrete $\bm z$ yields a feasible solution of the original problem.
We denote ($P_{\bm y}$) the $0,1$-LP \eqref{eq:primalrelaxation} defined with $\bm y = (y_{ik})_{i \in [n], k \in [n]}$.

\subsection{Optimality Condition and Approximation Ratio}

Next we express an optimality condition so that an integer solution $\bm z^*$ computed from a dual feasible solution $\bm y^*$ of \eqref{eq:dual} is optimal for program \eqref{eq:primal}.
First, note that any extreme solution $(\bm v, \bm t, \bm y)$ of program \eqref{eq:dual} is such that either $y_{ik}=0$ or $y_{ik}=w'_k$ for all $i \in [n]$ and $k \in [n]$.
\begin{theorem}\label{th:optimal}
Let $(\bm v, \bm t, \bm y^*)$ be an extreme solution of \eqref{eq:dual} and let $\bm z^*$ be the optimal solution of program {\em ($P_{\bm y^*}$)}. 
Let $T^*_i=\sum_{j \in [m]} u_{ij}z^*_{ij}$ for all $i \in [n]$ and assume without loss of generality that $T^*_{1} \geq T^*_{2} \geq \ldots \geq T^*_n$. \\
If for all $k \in [n]$, $y^*_{ik}=w'_k$ for all $i\geq n+1-k$ and $y^*_{ik}=0$ for all $i \in [n-k]$ then $\bm z^*$ is an optimal solution of program \eqref{eq:primal}.
\end{theorem}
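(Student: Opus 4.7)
The strategy is to turn the given ordering condition on $\bm y^*$ into a strong-duality certificate for the LP relaxation of \eqref{eq:primal}, matching a primal evaluation at $\bm z^*$ with the dual value at $(\bm v, \bm t, \bm y^*)$.

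First I would complete $\bm z^*$ into a primal-feasible point of \eqref{eq:primal} by setting $r^*_k = T^*_{n+1-k}$ and $d^*_{ik} = \max\{0, r^*_k - T^*_i\}$ for all $i, k \in [n]$. Using the ordering $T^*_1 \geq \cdots \geq T^*_n$, constraints \eqref{eq:primalc}--\eqref{eq:primal2} hold immediately, and one recognizes $(r^*_k, (d^*_{ik})_i)$ as the optimal solution of $(\mathcal{DL}_k)$ at $\bm v = \bm T^*$, so $k r^*_k - \sum_i d^*_{ik} = L_k(\bm T^*)$. Summing with the weights $w'_k$ and invoking the Lorenz rewriting \eqref{eq:owaLorenz} yields a primal value of $\sum_k w'_k L_k(\bm T^*) = G_{\bm w}(\bm T^*)$.

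Next I would use the ordering hypothesis on $\bm y^*$ to rewrite the dual side in terms of the same Lorenz sum: $\sum_i \sum_k y^*_{ik} T^*_i = \sum_k w'_k \sum_{i \geq n+1-k} T^*_i = G_{\bm w}(\bm T^*)$. The left-hand side is precisely the objective of $(P_{\bm y^*})$ at $\bm z^*$, which by assumption is optimal for $(P_{\bm y^*})$. When $\bm y$ is held fixed at $\bm y^*$, the remaining constraints on $(\bm v, \bm t)$ in \eqref{eq:dual} coincide with those of the LP dual of the continuous relaxation of $(P_{\bm y^*})$, with the same objective $\bm b^\intercal \bm v + \sum_{ij} t_{ij}$. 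Since $(\bm v, \bm t, \bm y^*)$ is an (optimal) extreme solution of \eqref{eq:dual}, $(\bm v, \bm t)$ must be dual-optimal for $(P_{\bm y^*})$: otherwise one could strictly improve \eqref{eq:dual} while keeping $\bm y^*$ fixed. Combining this with LP duality for $(P_{\bm y^*})$ (whose underlying combinatorial polytope is integral in the settings of Section~\ref{sec:model}, so there is no integrality gap) delivers $\bm b^\intercal \bm v + \sum_{ij} t_{ij} = G_{\bm w}(\bm T^*)$.

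Finally, weak duality between \eqref{eq:primal} and \eqref{eq:dual} turns any feasible dual value into an upper bound on any feasible primal value, so the equality of the two values computed above forces both $(\bm z^*, r^*, d^*)$ and $(\bm v, \bm t, \bm y^*)$ to be LP-optimal, and in particular $\bm z^*$ to solve the integer program \eqref{eq:primal}. The delicate step will be the middle paragraph: one must justify that $(\bm v, \bm t)$ inherits from \eqref{eq:dual} the LP-dual optimality needed for $(P_{\bm y^*})$ and that $(P_{\bm y^*})$ has no integrality gap, both of which rest on the extreme-point nature of the dual solution together with the integrality of the underlying combinatorial polytope; the rest of the argument is a mechanical unpacking of the Lorenz decomposition.
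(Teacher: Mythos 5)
Your primal half is exactly the paper's argument: setting $r^*_k=T^*_{n+1-k}$ and $d^*_{ik}=\max(0,r^*_k-T^*_i)$ gives a feasible point of \eqref{eq:primal} whose value telescopes to $\sum_{k\in[n]} w'_k L_k(\bm T^*)=G_{\bm w}(\bm T^*)$, and the hypothesis on $\bm y^*$ makes the objective of ($P_{\bm y^*}$) at $\bm z^*$ equal to the same quantity. The genuine gap is in your middle step. The theorem only assumes that $(\bm v, \bm t, \bm y^*)$ is an \emph{extreme} point of the feasible region of \eqref{eq:dual}; your parenthetical ``(optimal)'' imports an assumption that is not in the statement. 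Extremality does not give what you need: with $\bm y$ fixed at $\bm y^*$, the point $(\bm v, \bm t)$ is a vertex of the corresponding slice of the dual feasible set, but a vertex need not minimize $\bm b^\intercal \bm v + \sum_{i,j} t_{ij}$ over that slice. Hence the claim that $(\bm v, \bm t)$ is dual-optimal for the relaxation of ($P_{\bm y^*}$), and therefore that $\bm b^\intercal \bm v + \sum_{i,j} t_{ij} = G_{\bm w}(\bm T^*)$, does not follow; for a non-optimal extreme $(\bm v,\bm t)$ this dual value can be strictly larger than $G_{\bm w}(\bm T^*)$, and then weak duality certifies nothing about $\bm z^*$.

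The repair is precisely the paper's move: discard the given $(\bm v,\bm t)$ altogether, solve the continuous relaxation of ($P_{\bm y^*}$), and let $(\bm v^*,\bm t^*)$ be \emph{its} optimal dual solution; composing it with $\bm y^*$ yields a feasible point of \eqref{eq:dual} whose objective value equals, by LP duality, the optimal value of ($P_{\bm y^*}$), i.e.\ $\sum_{i\in[n]}\bigl(\sum_{k\in[n]} y^*_{ik}\bigr)T^*_i = G_{\bm w}(\bm T^*)$, and weak duality between \eqref{eq:primal} and \eqref{eq:dual} then closes the argument exactly as you intend. Note also that both your write-up and the paper's proof rely on the relaxation of ($P_{\bm y^*}$) having no integrality gap at $\bm z^*$ (you at least state this explicitly); if one wants the theorem for a general feasible set $\mathcal X$ without such an assumption, the upper bound should instead be taken from the Lagrangian relaxation \eqref{eq:lagrangianrelaxation} with $\bm\lambda=\bm y^*$, which keeps $\bm z$ integral and yields $\mathcal L(\bm y^*)=G_{\bm w}(\bm T^*)$ directly as the required bound on the optimum of \eqref{eq:primal}.
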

\begin{proof}
Let $(\bm v^*, \bm t^*)$ be the dual optimal solution associated with $\bm z^*$ when solving ($P_{\bm y^*}$). 
Composing them with $\bm y^*$, we obtain a feasible solution $(\bm v^*, \bm t^*, \bm y^*)$ of \eqref{eq:dual}. 
By duality theory of linear programming, the objective value of this solution is equal to $\sum_{i \in [n]}(\sum_{j \in [i]} w'_{n+1-j})T^*_i$. 
Let us now build a feasible solution $(\bm r^*, \bm d^*, \bm z^*)$  of \eqref{eq:primal} based on $\bm z^*$ as follows. 
For all $k \in [n]$, \\[.5ex]
$\bullet$ $r^*_k=T^*_{n+1-k}$ and\\[.5ex]
$\bullet$ $d^*_{ik}=\left\{ \begin{array}{cc}r^*_k - T^*_{i} & \mbox{if $i \geq n+1- k$}\\
0 & \mbox{otherwise} \end{array} \right.$ for all $i \in [n]$ .\\[.5ex]
We now show that $(\bm r^*, \bm d^*)$ satisfy constraints \eqref{eq:primalc}. 
For any $i \in [n]$ and $k \in [n]$, if $i \leq n+1-k$ then as $r^*_k=T^*_{n+1-k}\leq T^*_{i}$ and $d^*_{ik}=0$, we have 
\begin{align*}
r^*_k -d^*_{ik} \leq T^*_i= \sum_{j \in [m]} u_{ij}z^*_{ij} \enspace .
\end{align*}
If $i\geq n+1-k$ then as $d^*_{ik}=r^*_k - T^*_{i} $, $r^*_k -d^*_{ik}=T^*_i=\sum_{j \in [m]} u_{ij}z^*_{ij}$.
Hence $(\bm r^*, \bm d^*, \bm z^*)$ is a feasible solution of \eqref{eq:primal}.
For any $k \in [n]$, $k r^*_k - \sum_{i \in [n]} d^*_{ik}=kr^*_k - \sum_{i=n+1-k}^nd^*_{ik}=
kr^*_k - (kr^*_k - \sum_{i=n+1-k}^nT^*_i)=\sum_{i=n+1-k}^nT^*_i)$. 
Then it is easy to see that the objective value of this solution, which is 
$\sum_{k \in [n]} w'_k ( k r^*_k - \sum_{i \in [n]} d^*_{ik})$ is equal to $\sum_{k \in [n]} w'_{k}\sum_{i=n-k+1}^nT^*_i$.
This sum is just a rewriting of $\sum_{i \in [n]}(\sum_{j \in [i]} w'_{n+1-j})T^*_i$. 
Thus, by duality of linear programming, the solution $(\bm r^*, \bm d^*, \bm z^*)$ is optimal for program \eqref{eq:primal}.
\qed
\end{proof}
Theorem \ref{th:optimal} provides an optimality condition for any feasible solution $\bm z^*$, but does not indicate how to find "good" solutions. 
Yet, one may be interested in the quality of some special solutions, e.g., the optimal solution of the maximum weight version. 
The following theorem establishes an approximation ratio for the latter, which also applies to our method as discussed later.
\begin{theorem}\label{th:bound}
Let $\bar{\bm z}$ be an optimal solution of the maximum weight version.
Let $\bar{T}_i=\sum_{j \in [m]} u_{ij}\bar{z}_{ij}$ for all $i \in [n]$ and assume without loss of generality that $\bar{T}_{1} \geq \bar{T}_{2} \geq \ldots \geq \bar{T}_n$. Let $w'_{\max} = \max_{k \in [n]} w'_k$.
Then the GGI value of $\bar{\bm z}$ is at worst $\max(\frac{2w'_n}{(n+1)w'_{\max}}, \frac{n \bar{T}_n}{(\sum_{i \in [n]} \bar{T}_i))}) $ of the optimal objective value of program \eqref{eq:primal}.
\end{theorem}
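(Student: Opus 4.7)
The approach hinges on two tools: the Lorenz rewriting $G_{\bm w}(\bm v) = \sum_{k=1}^n w'_k L_k(\bm v)$ recalled in the paper, and Chebyshev's sum inequality applied to oppositely-ordered sequences. The plan is to prove the two candidate lower bounds appearing inside the $\max$ separately, and then just retain the larger one.

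For the first ratio $\frac{2 w'_n}{(n+1) w'_{\max}}$, let $\bm z^*$ denote a GGI-optimal solution. I would upper bound OPT by pulling $w'_{\max}$ outside the Lorenz sum and rewriting $\sum_k L_k(\bm z^*)$ by swapping summations as $\sum_i (n+1-i)\, T^{*,\uparrow}_i$, where $T^{*,\uparrow}$ denotes the sorted (increasing) utilities of $\bm z^*$. Chebyshev's inequality, applied to the decreasing sequence $(n+1-i)_i$ and the increasing sequence $(T^{*,\uparrow}_i)_i$, gives $\sum_k L_k(\bm z^*) \le \frac{n+1}{2} L_n(\bm z^*)$; the max-weight optimality of $\bar{\bm z}$ then replaces $L_n(\bm z^*)$ by $L_n(\bar{\bm z}) = \sum_i \bar T_i$. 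For the lower bound on $G_{\bm w}(\bar{\bm z})$, I would simply keep only the $k=n$ term of its Lorenz expansion, yielding $G_{\bm w}(\bar{\bm z}) \ge w'_n L_n(\bar{\bm z})$ since $w'_k L_k \ge 0$ for every $k$. Cancelling the common factor $L_n(\bar{\bm z})$ produces exactly the first ratio.

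For the second ratio $\frac{n \bar T_n}{\sum_i \bar T_i}$, I would switch back to the original OWA form $\sum_i w_i v^\uparrow_i$. Chebyshev applied to the decreasing $(w_i)_i$ and the increasing $(T^{*,\uparrow}_i)_i$ gives $\mathrm{OPT} \le \frac{W}{n} L_n(\bm z^*) \le \frac{W}{n} \sum_i \bar T_i$ with $W = \sum_i w_i$, where the second inequality again uses that $\bar{\bm z}$ maximizes the total sum. On the other hand, since every sorted entry of $\bar T$ is at least $\bar T_n$, one immediately obtains $G_{\bm w}(\bar{\bm z}) \ge W \bar T_n$. Dividing produces the second ratio, and taking the maximum of the two bounds concludes the proof.

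The single nontrivial step is the application of Chebyshev's sum inequality: in both cases one must verify that the two sequences being multiplied are monotone in opposite directions, which is automatic from the sorted form of the utilities and the fact that $w_i$ is strictly decreasing for GGI. The remaining ingredients, namely $L_k \ge 0$ (used to drop non-$n$ terms) and $L_n(\bm z^*) \le L_n(\bar{\bm z})$ (by the maximum weight optimality of $\bar{\bm z}$), are immediate from $u_{ij} \in \mathbb N$ and require no further work.
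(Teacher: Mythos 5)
Your proof is correct, and it takes a genuinely different route from the paper's. The paper argues through the linear program and its dual: it exhibits the dual-feasible point $\bar y_{ik}=\frac{k}{n}w'_k$, replaces $w'_k$ by $w'_{\max}$ to get the upper bound $\frac{(n+1)w'_{\max}}{2}\sum_{i\in[n]}\bar T_i$ on the optimum of \eqref{eq:primal}, constructs an explicit feasible solution $(\bar{\bm r},\bar{\bm d},\bar{\bm z})$ of \eqref{eq:primal} whose value is $G_{\bm w}(\bar{\bm z})=\sum_{i\in[n]}\big(\sum_{j\in[i]}w'_{n+1-j}\big)\bar T_i$, and then compares the two expressions term by term (once in the $\bar T_i$'s, once in the $w'_k$'s) to extract the two ratios. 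You never touch the LP formulation: you bound $\mathrm{OPT}=G_{\bm w}(\bm z^*)$ directly via the Lorenz rewriting and Chebyshev's sum inequality, namely $\sum_k L_k(\bm v)=\sum_i(n+1-i)v^{\uparrow}_i\le\frac{n+1}{2}L_n(\bm v)$ and $\sum_i w_i v^{\uparrow}_i\le\frac{W}{n}L_n(\bm v)$, invoke max-weight optimality of $\bar{\bm z}$ only through $L_n(\bm z^*)\le L_n(\bar{\bm z})$, and lower-bound $G_{\bm w}(\bar{\bm z})$ by keeping the $k=n$ Lorenz term (respectively by $W\bar T_n$). The resulting constants are the same ones the paper obtains—your Chebyshev step reproduces exactly the value coming from the paper's uniform dual point $\bar y_{ik}=\frac{k}{n}w'_k$—but your argument is more elementary and self-contained; in particular it only ever compares integer-feasible solutions, so it sidesteps the paper's implicit identification of the relaxed maximum-weight value with $\sum_i\bar T_i$, which for a general feasible set $\mathcal X$ would require the relaxation to be integral or an extra weak-duality remark. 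Two small points worth making explicit in your write-up: the identification of the optimum of \eqref{eq:primal} with $\max_{\bm z\in\mathcal X}G_{\bm w}\big((\sum_j u_{ij}z_{ij})_i\big)$ rests on the Ogryczak--Sliwinski linearization (valid since $w'_k\ge 0$), and the steps where you drop Lorenz terms or cancel $L_n(\bar{\bm z})$ use nonnegativity of the utilities (with the degenerate all-zero case being trivial).
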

\begin{proof}
Let vector $\bar{\bm y} \in \mathbb{R}^{n\times n}$ be defined as $\bar{y}_{ik}=\frac{k}{n}w'_k$ for $i, k \in [n]$, which is feasible for program \eqref{eq:dual}.  
The objective function of ($P_{\bar{\bm y}}$) satisfies: 
\begin{align}
\sum_{i \in [n]} \big(\sum_{k \in [n]} \bar{y}_{ik} \big) \sum_{j \in [m]} u_{ij}z_{ij} &= 
\sum_{i \in [n]} \sum_{j \in [m]} (\sum_{k \in [n]} \frac{k}{n}w'_k)u_{ij}z_{ij} \nonumber\\
&\le \sum_{i \in [n]} \sum_{j \in [m]} (\sum_{k \in [n]} \frac{k}{n}w'_{\max})u_{ij}z_{ij} \label{eq:wmax}
\end{align}
Program \eqref{eq:primalrelaxation} with objective~\eqref{eq:wmax} corresponds to the maximum weight version scaled by a constant. 
It is equal to $\sum_{k \in [n]} (kw'_{\max}/n) \times$ $\sum_{i \in [n]}(\bar{T}_i)$ for solution $\bar{\bm z}$, which is an upperbound of the objective value associated with $\bar{\bm y}$ of \eqref{eq:dual} and hence an upperbound for the optimal value of \eqref{eq:primal}.\\
Proceeding as for Theorem \ref{th:optimal}, we define a feasible solution of \eqref{eq:primal} based on $\bar{\bm z}$:\\[.5ex]
$\bullet$ $\bar{r}_k=\bar{T}_{n+1-k}$ for all $k \in [n]$, and \\[.5ex]
$\bullet$ $\bar{d}_{ik}=\left\{ \begin{array}{cc}\bar{r}_k - \bar{T}_{i} & \mbox{if $i \geq n+1- k$}\\
0 & \mbox{otherwise} \end{array} \right.$ for all $i \in [n]$, for all $k \in [n]$.\\[.5ex]
The objective value of this solution $\sum_{i \in [n]} (\sum_{j \in [i]} w'_{n+1-j})\bar{T}_i$ (see proof of Theorem \ref{th:optimal}) is to be compared with upperbound $\sum_{i \in [n]} (\sum_{k \in [n]} kw'_{\max}/n)\bar{T}_i$.\\
By comparing term by term w.r.t. $\bar{T}_i$ for $i \in [n]$, we can see that the worst case happens to the term associated with $\bar{T}_1$ with the ratio $w'_n/(\sum_{k \in [n]} k w'_{\max}/n)$. 
Therefore, we obtain the ratio $\frac{2w'_n}{(n+1)w'_{\max}}$. 
This ratio is consistent since when $n=1$, the optimal solution of the maximum weight version coincides with the optimum solution of \eqref{eq:primal}. \\
By comparing term by term with respect to $w'_k$ for $k \in [n]$, we can see that the worst case happens to the term associated with $w'_1$ with the ratio $n\bar{T}_n/(\sum_{i=1}^{n}\bar{T}_i)$,
which can be interpreted as the ratio of the smallest utility over the average utility in the optimal solution of the maximum weight version. 
This ratio is consistent since in the case of equal utilities in the optimal solution of the maximum weight version, the latter coincides with the optimum solution of \eqref{eq:primal}.
\qed
\end{proof}
\subsection{Iterative Algorithm}
The previous discussion motivates us to design an alternating optimization algorithm that starts with a feasible $\bm y$ for \eqref{eq:dual}, computes the associated $\bm z$ and uses the latter to iteratively improve $\bm y$. 
Formally, it can be sketched as follows:
\begin{algorithmic}[1]
\State $t \gets 0$
\State compute $\bm y^{(0)}$ \label{algl:initial}
\Repeat 
\State $t \gets t+1$
\State solve $0,1$-LP ($P_{\bm y^{t-1}}$) to obtain feasible solution $\bm z^{(t)}$
\State update $\bm y^{(t)}$ based on $\bm y^{(t-1)}$ and $\bm z^{(t)}$ \label{algl:update}
\Until {max iteration has been reached or change on  $y_{ik}^{(t)}$ is small}
\State return $\bm z^{(t)}$ with highest GGI
\end{algorithmic}
Interestingly, lines~\ref{algl:initial} and \ref{algl:update} can be performed in different ways.
For line~\ref{algl:initial}, an initial $\bm y^{(0)}$ can be obtained by solving the dual LP \eqref{eq:dual}.
Another approach is to solve the maximum weight version of our combinatorial problem and get the dual solution variables for $\bm y^{(0)}$.
Note that Theorem~\ref{th:bound} then provides a guarantee on the final solution, as it is at least as good as that of the maximum weight problem.
For line~\ref{algl:update}, one approach is to solve \eqref{eq:primal} with $\bm z$ fixed to $\bm z^{(t)}$ in order to get dual solution variables $\bm y^{(t)}$.
A better approach as observed in the experiments and explained next is based on Lagrangian relaxation.


The Lagrangian relaxation of \eqref{eq:primal} with respect to constraint  \eqref{eq:primalc} can be written as follows with Lagrangian multipliers $\bm \lambda = (\lambda_{ik})_{i \in [n], k \in [n]}$:
\begin{subequations}
\small
\label{eq:lagrangianrelaxation}
\begin{empheq}[]{align}
\mathcal L(\bm \lambda) &=& \!\!\!\!\!\!\!\!{\max}. ~ & \sum_{k \in [n]} (w'_k k - \sum_{i \in [n]} \lambda_{ik}) r_k - \sum_{k \in [n]} \sum_{i \in [n]} (w'_k - \lambda_{ik}) d_{ik}\label{eq:lagrangianrelaxation1}\\
&&& \quad\quad+ \sum_{i \in [n]} \big(\sum_{k \in [n]} \lambda_{ik} \big) \sum_{j \in [m]} u_{ij}z_{ij} \\
&&\mbox{s.t.~~} & \bm A \bm z \le \bm b  \\
&&&\bm z \in \{0, 1\}^{n \times m} \\
&&& d_{ik} \ge 0 \quad\quad \forall i \in [n], \forall k \in [n] \label{eq:lagrangianrelaxation2}
\end{empheq}
\end{subequations}
The Lagrangian dual of \eqref{eq:lagrangianrelaxation} is then given by:
\begin{align}
\label{eq:lagrangiandual}
{\min}. ~ \mathcal L(\bm \lambda) \quad
\mbox{s.t.~~} & \lambda_{ik} \ge 0 \quad \forall i \in [n], \forall k \in [n] 
\end{align}
For an optimal solution $\bm z^*, \bm r^*, \bm d^*$ of the $0,1$-LP \eqref{eq:primal}, we have for any $\bm \lambda \in \mathbb R^{n\times n}_+$:
\begin{align}
\sum_{k \in [n]} w'_k ( k r^*_k - \sum_{i \in [n]} d^*_{ik} ) &\le \sum_{k \in [n]} (w'_k k - \sum_{i \in [n]} \lambda_{ik}) r_k - \sum_{k \in [n]} \sum_{i \in [n]} (w'_k - \lambda_{ik}) d_{ik} \nonumber\\
&\qquad+ \sum_{i \in [n]} \big(\sum_{k \in [n]} \lambda_{ik} \big) \sum_{j \in [m]} u_{ij}z_{ij} \le \mathcal L(\bm \lambda) \nonumber
\end{align}
The first inequality holds because of the nonnegativity of $\bm \lambda$ and the feasibility of $\bm z^*, \bm r^*, \bm d^*$.
The second is true because of the maximization in \eqref{eq:lagrangianrelaxation}.
Therefore the best upperbound is provided by the solution of the Lagrangian dual \eqref{eq:lagrangiandual}, though this problem is not easy to solve due to the integrality condition over $\bm z$.

An inspection of program \eqref{eq:lagrangianrelaxation} leads to two observations: 
(i) it can be decomposed into two maximization problems, one over $\bm z$ and the other over $\bm r$ and $\bm d$;
(ii) for program \eqref{eq:lagrangianrelaxation} to yield a useful upperbound, $\bm \lambda$ should satisfy two constraints (otherwise $\mathcal L(\bm \lambda) = \infty$):
\begin{align*}
\sum_{i \in [n]} \lambda_{ik} = k w'_k \quad \forall k \in [n] \quad\mbox{and}\quad
\lambda_{ik} \le w'_k \quad \forall i \in [n], \forall k \in [n]
\end{align*}
Interestingly, in the above decomposition, the maximization problem over $\bm z$ corresponds to ($P_{\bm \lambda}$) and therefore $\bm \lambda$ can be identified to the dual variable $\bm y$.

Based on those observations, line~\ref{algl:update} can be performed as follows. 
Given $\bm \lambda$ (or $\bm y$), the upperbound $\mathcal L(\bm \lambda)$ can be improved by updating $\bm \lambda$ so as to decrease \eqref{eq:lagrangianrelaxation1}, which can be simply done by a projected sub-gradient step:
\begin{align}
\lambda'_{ik} &\gets \lambda_{ik} - \gamma (r_k - d_{ik} - \sum_{j \in [m]} u_{ij} z_{ij}) & \forall i\in [n], k\in [n]\\
\bm \lambda &\gets \operatorname*{arg\,min}_{\bm \lambda \in \mathbb L} || \bm \lambda' - \bm \lambda || \label{eq:projL}
\end{align}
where $\gamma$ is the sub-gradient step and \eqref{eq:projL} is the Euclidean projection of $\bm \lambda'$ on $\mathbb L = \{ \bm \lambda \in \mathbb R^{n \times n}_+ \,|\, \forall k \in [n], \sum_{i \in [n]} \lambda_{ik} = k w'_k, \forall i \in [n], \lambda_{ik} \le w'_k \}$.

Projection~\eqref{eq:projL} can be performed efficiently by exploiting the structure of $\mathbb L$:
\begin{align}
&\operatorname*{arg\,min}_{\bm \lambda \in \mathbb L} || \bm \lambda' - \bm \lambda || = \operatorname*{arg\,min}_{\bm \lambda \in \mathbb L} || \bm \lambda' - \bm \lambda ||^2 = \operatorname*{arg\,min}_{\bm \lambda \in \mathbb L} \sum_{i \in [n]} \sum_{k \in [n]} (\lambda'_{ik} - \lambda_{ik})^2 \nonumber\\
&= \big( \operatorname*{arg\,min}_{\bm \lambda_k \in \mathbb L_k} \sum_{i \in [n]} (\lambda'_{ik} - \lambda_{ik})^2 \big)_{k \in [n]} = \big( \operatorname*{arg\,min}_{\bm \lambda_k \in \mathbb L_k} \sum_{i \in [n]} (\frac{\lambda'_{ik}}{w'_k} - \frac{\lambda_{ik}}{w'_k})^2 \big)_{k \in [n]} \label{eq:projcap}
\end{align}
where $\mathbb L_k = \{ \bm \lambda_k \in \mathbb R^{n}_+ \,|\, \sum_{i \in [n]} \lambda_{ik}/w'_k = k, \forall i \in [n], \lambda_{ik}/w'_k \le 1 \}$.
Equation~\eqref{eq:projcap} states that projection~\eqref{eq:projL} can be efficiently performed by $n$ projections on capped simplices \cite{WangLu2015}.
The complexity of this step would be in $O(n^3)$, which is much faster than solving the quadratic problem \eqref{eq:projL}.
Besides, the $n$ projections can be easily computed in a parallel way.

We can provide a simple interpretation to the variable $\bm \lambda$ (or $\bm y$).
Considering programs~\eqref{eq:dual} and \eqref{eq:primalrelaxation}, we can observe that $\bm y$ corresponds to an allocation of weights $w'_k$'s over the different component $i$'s.
Indeed, an optimal solution of \eqref{eq:dual} would yield an extreme point of $\mathbb L$ (for a given $k \in [n]$, exactly $k$ terms among $(y_{1k}, \ldots, y_{nk})$ are equal to $w'_k$ and the other ones are null).
The projected sub-gradient method allows to search for an optimal solution of our fair combinatorial problem by moving inside the convex hull of those extreme points.

\vspace{-2mm}
\section{Experimental Results}\label{sec:expe}

\vspace{-2mm}
We evaluated our method on two different problems: assignment and matching. 
The LPs and  $0,1$-LPs were solved using CPLEX 12.7 on a PC (Intel Core i7-6700 3.40GHz) with 4 cores and 8 threads and 32 GB of RAM. 
Default parameters of CPLEX were used with 8 threads. The sub-gradient step $\gamma_t$ is computed following the scheme: $\gamma_t:=\frac{(val(\bm z_t)-bestvalue)\rho_t}{sqn}$ where $val(\bm z_t)$ is the objective value of the program 
\eqref{eq:primalrelaxation} with solution $\bm z_t$, $bestvalue$ is the best known objective value of the program \eqref{eq:primal} so far and $sqn$ is the square of the Euclidean norm of the subgradient vector. The parameter $\rho_t$ is divided by two every 3 consecutive iterations in which the upperbound $\mathcal L(\bm \lambda)$ has not been improved. 
The GGI weights were defined as follows: $w_k = 1/k^2$ for $k \in [n]$ so that they decrease fast in order to enforce more balanced solutions. \\[1ex]
{\bf Assignment}
To demonstrate the efficiency of our heuristic method, we generate hard random instances for the assignment problem. 
A random instance of this problem corresponds to a random generation of the $u_{ij}$'s, 
which are generated as follows.
For all $i \in [n]$, $u_{i1}$ follows a uniform distribution over $[100]$ and
for all $j \in [n]$, $u_{ij} = u_{i1} + \epsilon$ where $\epsilon$ is a random variable following a uniform distribution over integers between $-d$ and $d$ (with $d$ a positive integer parameter).
With such a generation scheme, agents' preferences over objects are positively correlated and the solution of the fair optimization problem is harder due to the difficulty of finding a feasible solution that satisfies everyone.\\[1ex]
\begin{table}[tb]
\centering
\begin{minipage}[t]{.47\textwidth}
\begin{tabular}[t]{|l|l|l|l|l|}
\hline
Instance & \multicolumn{2}{|c|}{CPLEX} & \multicolumn{2}{|c|}{AlterOpt}\\
\hline
 & CPU1 & CPU2 & CPU & Gap\\
\hline
v50-20 &1.02&1.02&0.23&0\%\\
v50-30 &3.14&3.14&0.26&0\%\\
v50-40 &64.95&14.26&0.45&0.28\%\\
v50-50 &1054.14&100.23&0.65&0.26\%\\
v30-20 &0.89&0.89&0.2&0\%\\
v30-30&8.83&8.83&0.3&0.015\%\\
v30-40&590.66&45.93&0.48&0.13\%\\
v10-20 & 1.55&1.55&0.18&0\%\\
v10-30 & 342.78&342.78&0.94&0\%\\
\hline
\end{tabular}%
\end{minipage}%
\begin{minipage}[t]{.47\textwidth}
\begin{tabular}[t]{|l|l|l|l|l|}
\hline
Instance & \multicolumn{2}{|c|}{CPLEX} & \multicolumn{2}{|c|}{AlterOpt}\\
\hline
 & CPU1 & CPU2 & CPU & Gap\\
\hline
v50-30 & 0.86 & 0.86 &0.79 & 0\%\\
\hline
v50-40 & 2.43 & 2.43 &1.42 & 0\%\\
\hline
v50-50 &5.14 &5.14 &2.67&0\%\\
\hline
v50-60 &148.5&25.45&13.43&0.01\%\\
\hline
v50-70 &2406.02& 1282.8&17.71&0.005\%\\
\hline
v30-30 & 1.15&1.15&0.78&0\%\\
\hline 
v30-40 &7.13 &7.13&1.44&0\%\\
\hline
v30-50&81.75&75.5&2.45&0.01\%\\
\hline
v30-60&1003.69&615.16&12.8&0.036\%\\
\hline
v10-30&5.33&5.33&0.76&0\%\\
\hline
v10-40&1325.7&806.8&1.4&0.06\%\\
\hline 
v10-50&29617.78&3370.7 &2.48&0.053\%\\
\hline
\end{tabular}
\end{minipage}%
\caption{Numerical results for (left) assignment and (right) general matching problems}
\end{table}
{\bf Matching}
We use the lemon library \cite{DezsJuttnerKovacs11} for solving the maximum weight matching problem.
For the generation of the matching problem (in a complete graph with $2n$ nodes), we follow a similar idea to the assignment problem.
Recall we only need $u_{ij}$ (and $z_{ij}$) for $i<j$.
For all $i \in [n]$, for all $j \in [n]$ with $i<j$, $u_{ij}=-1000$.
For all $i \in [n]$, $u_{i,n+1}$ follows a uniform distribution over $[100]$ and
for all $j \ge \max(i+1, n+2)$, $u_{ij} = u_{i,n+1} + \epsilon$ where $\epsilon$ is defined as above.\\[1ex]
%
{\bf Explanations}
The name of the instances is of the form "v$d$-$x$" where $d$ denotes the deviation parameter mentioned above and $x$ the number of the vertices of the graphs (i.e., $n=x/2$).  
Column ``CPLEX" regroups CPLEX's results. 
Subcolumn ``CPU1" reports the time (in seconds) that CPLEX spent to solve program \eqref{eq:primal} to optimal. 
Subcolumn ``CPU2" reports the times needed by the primal heuristic of CPLEX to obtain a feasible integer solution that is better than or equal to the solution given by our algorithm.
Column ``AlterOpt" reports our algorithm's results. 
Subcolumn ``CPU" is the time spent by our algorithm.
Subcolumn ``Gap" reports the gap in percentage between $Sol$ and $Opt$, which is equal to $(Opt -Sol)\times100/Opt \%$ where $Opt$ is the optimal value and $Sol$ is the value of the solution given by our algorithm.
The times and the gaps reported are averaged over $10$ executions corresponding to $10$ random instances.

Table 1 shows that the CPU time spent by CPLEX (subcolumn CPU1) for solving program \eqref{eq:primal} increases exponentially with $n$ and can quickly reach up to around 10 hours. 
Moreover, the smaller the deviation $x$, the more difficult the problem. 
For example, for $x=50$, we cannot solve instances with more than 50 and more than 70 vertices for respectively the fair assignment and general matching problems within 10 hours of CPU time. 
For $x=10$, this limit is respectively 30 and 50 vertices.
In contrast, the CPU time spent by our algorithm (subcolumn CPU) seems to increase linearly with $n$ and remains within tens or so seconds.
The quality of the solutions output by our algorithm is very good as the gap is at maximum around 0.3\% for fair assignment. 
This is even better for fair general matching, in all cases the gap is smaller than 0.1\%. 
Moreover, the CPU time that CPLEX needs to find a feasible integer solution of similar quality by primal heuristic is much longer than the CPU time of our algorithm (up to hundreds times longer).
It is interesting to notice that the fair assignment seems to be more difficult in our experiments than the fair general matching. 
This contrasts with the classical maximum weight version where the assignment problem is generally easier than the general maximum matching.



\section{Conclusion}
\label{sec:conclu}

We formulated the fair optimization with the Generalized Gini Index for a large class of combinatorial problem for which we proposed a primal-dual algorithm based on a Lagrangian decomposition.
We demonstrated its efficiency on several problems.
We also provided some theoretical bounds on its performance.
As future work, we plan to improve those bounds and investigate other updates for the Lagrangian multipliers.
Another interesting direction is to consider other linearization techniques such as the one proposed by Chassein and Goerigk \cite{ChasseinGoerigk15}.
Finally, we will also apply our method to problems whose maximum weight version is NP-hard.

\bibliography{biblio160226}
\bibliographystyle{splncs}
\end{document}